\newcommand{\orcid}[1]{\href{https://orcid.org/#1}{\includegraphics[height=1.8ex]{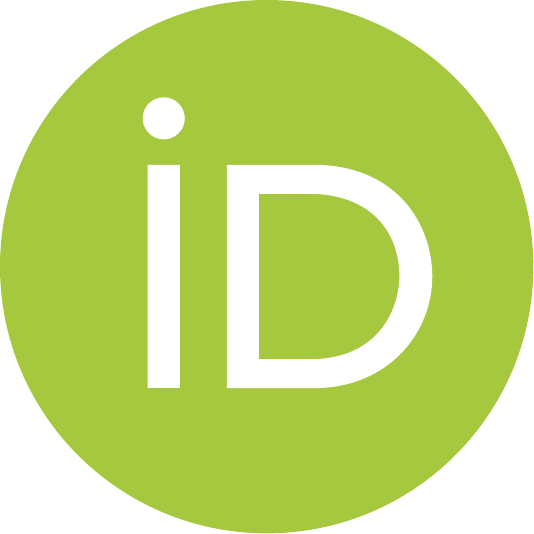}}}
\def\blfootnote{\gdef\@thefnmark{}\@footnotetext}
\title{Faster Convolutions:\\Yates and Strassen Revisited%
\blfootnote{\rightskip=5.7cm
The research is funded by the European Union (ERC, CountHom, 101077083).
Views and opinions expressed are those of the author(s) only and do not necessarily reflect those of the European Union or the European Research Council Executive Agency. Neither the European Union nor the granting authority can be held responsible for them.
}
}
\author{
Cornelius Brand \orcid{0000-0002-1929-055X}\\
University of Regensburg
\and
Radu Curticapean \orcid{0000-0001-7201-9905}\\
\qquad University of Regensburg and IT University of Copenhagen \qquad
\and
Baitian Li \orcid{0009-0003-7408-9781}\\
Institute for Interdisciplinary Information Sciences, Tsinghua University
\and
Kevin Pratt \orcid{0000-0002-2923-0905}\\
Courant Institute of Mathematical Sciences, New York University
}
\date{}
\newcommand{\fconv}[1][f]{\mathbin{\circledast_{#1}}}
\newcommand{\rank}{\mathrm{rk}}
\newtheorem{theorem}{Theorem}[section]
\newtheorem{lemma}[theorem]{Lemma}
\theoremstyle{definition}
\theoremstyle{remark}
\newtheorem{remark}[theorem]{Remark}
\begin{document}

\maketitle

\begin{textblock}{5}(7.85, 7.35) \includegraphics[width=150px]{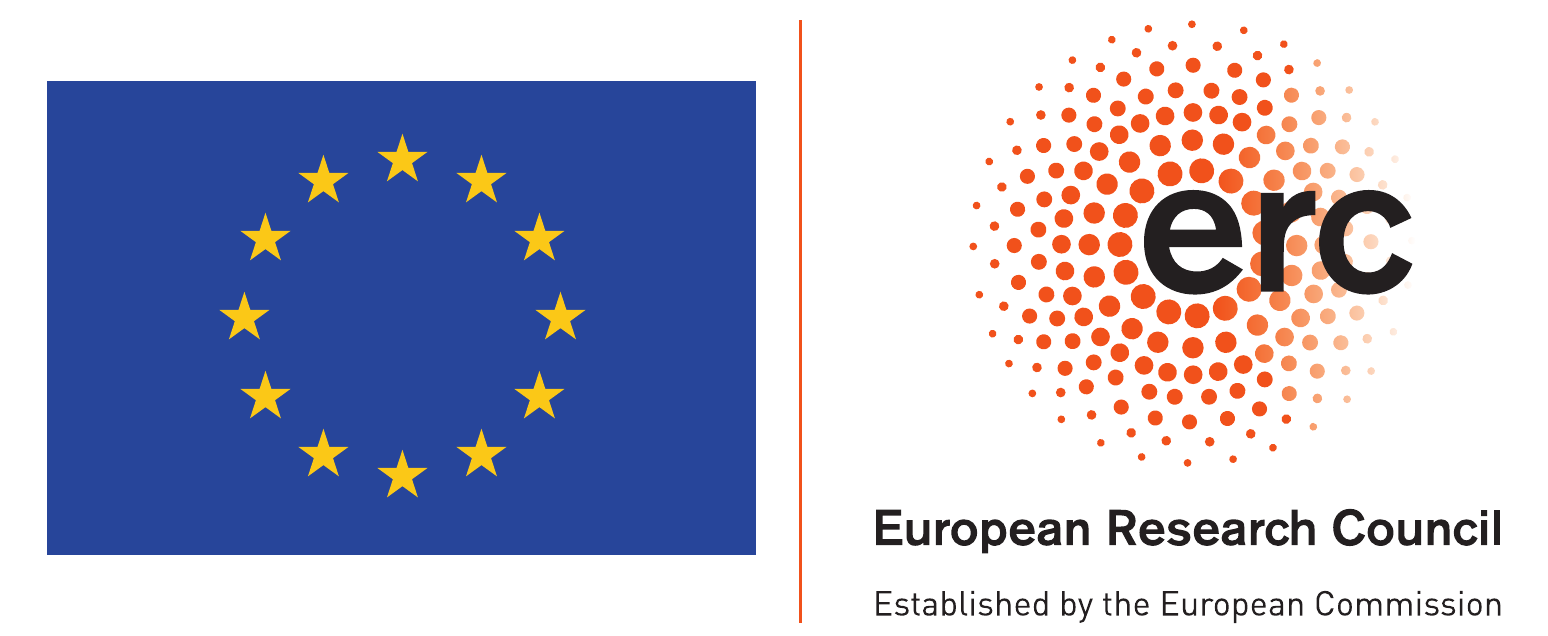} \end{textblock}

\begin{abstract}
Given two vectors $u,v \in \mathbb{Q}^D$ over a finite domain $D$ and a function $f : D\times D\to D$, the convolution problem asks to compute the vector $w \in \mathbb{Q}^D$ whose entries are defined by
\[
w(d) = \sum_{\substack{x,y \in D \\ f(x,y)=d}} u(x)v(y).
\]
In parameterized and exponential-time algorithms, convolutions on \emph{product domains} are particularly prominent:
Here, a finite domain $B$ and a function $h : B \times B \to B$ are fixed, and convolution is done over the product domain $D = B^k$, using the function $h^k :D \times D\to D$ that applies $h$ coordinate-wise to its input tuples.

We present a new perspective on product-domain convolutions through multilinear algebra. This viewpoint streamlines the presentation and analysis of existing algorithms, such as those by van Rooij \emph{et al.}~(ESA 2009). Moreover, using established results from the theory of fast matrix multiplication, we derive improved $O^\ast(|B|^{2\omega/3 \cdot k}) = O(|D|^{1.582})$ time algorithms, improving upon previous upper bounds by Esmer \emph{et al.}~(Algorithmica 86(1), 2024) of the form $c^k |B|^{2k}$ for $c < 1$. Using the setup described in this note, Strassen's asymptotic rank conjecture from algebraic complexity theory would imply quasi-linear $|D|^{1+o(1)}$ time algorithms. This conjecture has recently gained attention in the algorithms community. (Björklund-Kaski and Pratt, STOC 2024, Björklund \emph{et al.}, SODA 2025)

Our paper is intended as a self-contained exposition for an algorithms audience, and it includes all essential mathematical prerequisites with explicit coordinate-based notation. In particular, we assume no knowledge in abstract algebra.
\end{abstract}
\newpage

\section{Introduction}

\paragraph{Algorithms on Tree-Decompositions.}
Many $\mathsf{NP}$-hard graph problems admit $c^t \cdot n^{O(1)}$ time algorithms when a tree-decomposition of width $t$ is given along with an $n$-vertex input graph $G$. 
This includes finding a minimum dominating set, counting perfect matchings, and many other problems.
The algorithms for these problems process the bags of a nice tree-decomposition, proceeding from the leaves to the root, to construct partial solutions for the problem from previously computed partial solutions.
More specifically, for every bag $\beta$ in the tree-decomposition, the algorithms maintain some statistics (typically maximum size or count) about partial solutions induced by vertices from bags below $\beta$. These statistics are broken down by certain possible \emph{bag states} for $\beta$. In many problems, a bag state takes on the form of a tuple of \emph{vertex states}---one for each vertex in $\beta$. For example:
\begin{itemize}
\item The vertex states for the $3$-coloring problem are given by the three possible colors. 
\item For the dominating set problem, Telle and Proskurowski~\cite{TelleP97} gave a natural set of three vertex states. Namely, a vertex is either (1) part of the solution itself, or (2) it is dominated already by a vertex encountered down the tree, or (3) it will be dominated in some future bag further up the tree. 
\end{itemize}

\paragraph{Join Nodes.}

The crux with this approach then arises at \emph{join nodes}, where partial solutions of the children $\alpha,\alpha'$ of a bag $\beta$ need to be combined to partial solutions at $\beta$. This can become a bottleneck when vertex states for $v$ in the join node $\beta$ can arise from multiple possible vertex states for $v$ in the child bags $\alpha$ and $\alpha'$. Reconsidering our previous examples:
\begin{itemize}
    \item In the example of $3$-coloring, this problem does not arise, since it is clear that vertices appearing in the children should have the same color in the parent. 
    \item However, in the dominating set problem, a vertex might be dominated by a partial solution in either the left, the right, or both sub-trees of a node. Then, considering all different cases separately yields a running time of $O^\ast(5^t)$ instead of $O^\ast(3^t)$.
\end{itemize}

Alber \emph{et al.}~\cite{AlberBFKN02} and, subsequently, Van Rooij \emph{et al.}~\cite{RooijBR09} have shown how to turn Telle and Proskurowski's original set of three states into another set of three states with subtly modified, but overall equivalent semantics, eliminating redundancies when combining partial solutions. This allows us to solve the problem in $O^\ast(3^t)$ time, which is tight under SETH~\cite{LokshtanovMS18}. In particular, the state of ``not part of a solution, but dominated already,'' can be replaced by simply ``not part of the solution,'' and remaining agnostic as to whether the vertex is dominated. This creates redundant overlap with the remaining states, which then has to be handled by an adapted formula in the dynamic program, but overall leads to an improved running time. 

\paragraph{Convolutions.}
These modified semantics were designed carefully and cleverly, but in a very problem-specific manner. 
Van Rooij et al.~\cite{RooijBR09} and others more generally connected the processing of join nodes to so-called \emph{convolutions}, which are mathematical primitives that can precisely capture how combinations of child bag states yield parent bag states. As it turns out, for dominating sets as well as a host of related problems on tree decompositions, algorithmic progress is largely enabled by devising seemingly specialized convolution operations for specific encodings of the state set. This has sparked interest in computing such convolutions in more generality from the point of view of parameterized complexity~\cite{EsmerKMSW24}. 

\paragraph{Our Contributions.}
The careful modification of the semantics of the vertex state set and the efficient, ad-hoc implementation of accompanying convolutions for bag states make it seem like a sort of ``creative spark'' is needed to derive better algorithms for computation at join nodes. 
In this note, we wish to add two insights:
\begin{enumerate}
    \item We demonstrate that the optimal equivalent semantics for the state set and the respective convolution can be determined mechanically via \emph{low-rank tensor decompositions}, a well-known mathematical optimization problem.
    \item We show that the arising convolutions can generally be performed much more efficiently than is widely known in the algorithms community, through established results within algebraic complexity theory.
\end{enumerate}

\subsection{Convolution Problems}
The technical setup in general convolution problems is as follows.
We fix a finite domain $D$ and work with $D$-indexed vectors $v \in \mathbb Q^D$, i.e., functions $v : D\to \mathbb Q$.
In the algorithmic interpretation, $D$ is the set of possible bag states, and vectors $v \in \mathbb Q^D$ store the relevant statistics about partial solutions, broken down by bag state: The entry $v(d)$ for $d\in D$ stores some data about partial solutions with state $d$.  
We write $e_d \in \mathbb \{0,1\}^D$ for $d\in D$ to denote the canonical basis vector with $e_d(j)=1$ iff $j=d$.

Next, we fix a \emph{partial} function $f : D\times D \to D$.
This function captures how two children bag states compose to a parent bag state.
Given two $D$-indexed vectors $u,v\in \mathbb Q^D$, our aim then is to compute their convolution, which is the $D$-indexed vector
\begin{align} \label{eq:convdev}
u \fconv v = \sum_{x,y\in D}u(x)v(y) e_{f(x,y)},
\end{align}
where the sum on the right-hand side of course only ranges over those $x,y\in D$ such that $f(x,y)$ is defined.
To be explicit, the data stored for state $d\in D$ reads
\[
(u \fconv v)(d) = \sum_{\substack{x,y \in D \\ f(x,y)=d}}u(x)v(y).
\]
\begin{remark}
    We point out already here that Esmer \emph{et al.}~\cite{EsmerKMSW24} have similarly defined generalized $f$-convolutions, with the subtle difference of requiring $f$ to be a total function. Their results seem to rely on this restriction, but unfortunately, this requirement excludes most examples in the parameterized algorithms literature from consideration.
\end{remark} 

An algorithm using $O(|D|^2)$ arithmetic operations for the computation of \eqref{eq:convdev} is immediate, but near-linear algorithms are known for many important functions $f$.
Many of these cases exhibit a ``product structure,'' capturing that bag states are tuples of vertex states from a constant-sized set $B$:
Given a partial function $h : B\times B\to B$ and $k \in \mathbb N$, where $B$ and $h$ are fixed, we define $h^k :B^k\times B^k\to B^k$ by applying $h$ pointwise to tuples. That is, 
\begin{equation}
\label{eq:product-fn}
    h^k((x_1,\ldots,x_k),(y_1,\ldots,y_k)) = (h(x_1,y_1),\ldots,h(x_k,y_k)).    
\end{equation}
To be explicit, the partial function $h^k$ is undefined whenever one of the entries of the tuple on the right is undefined.
We say that $f=h^k$ exhibits product structure with base domain $D$ and base function $h$.

\subsection{Processing Join Nodes Using Convolutions}
To connect back to our original motivation, let us phrase a typical join operation in a tree decomposition using convolutions: Assume that every vertex in a bag can have a state from a set $B$ with $|B|=d$.
Then the possible states of a single bag are given by $B^{w}$, where $w$ is the size of the bag. For each such state $s\in B^w$, we typically store some number $T[s]$ in a table $T$. Now, let $f$ be a function such that, whenever a vertex $v$ occurs with state $x\in B$ in the left child, and with state $y \in B$ in the right child, it will occur with state $f(x,y) \in B$ in the parent join node. In this setup, the entries of the join node are then often given precisely by the convolution of the tables in the child nodes, i.e. $T[z] = \sum_{f^w(x,y)=z} T[x] \cdot T[y].$

We consider several examples to flesh out this abstract skeleton. For more background on dynamic programming on tree decompositions and problem definitions, we refer to any introduction to parameterized algorithms, such as~\cite{CyganFKLMPPS15}.
For a graphical representation of the convolutions, see Fig.~\ref{fig:tensors}.

\paragraph*{3-Colorings and the Trivial Convolution.} For the 3-coloring problem, we build a table $T_\beta$ at each bag $\beta$ such that $T_\beta$ contains a $0/1$-value for every assignment $\chi: \beta \rightarrow \{1,2,3\}$ of colors to bag vertices. The table entry at $\chi$ encodes whether there is a solution for the subgraph induced by all vertices in the bags below $\beta$ in the tree decomposition such that this solution agrees with $\chi$ on the vertex set $\beta$. 
In a join node with children $\alpha,\alpha'$, we have $T_\beta[\chi] = T_\alpha[\chi] \cdot T_{\alpha'}[\chi]$ for all $\chi$. The table $T_\beta$ can be computed trivially in time $O^\ast(3^t)$ from $T_\alpha$ and $T_{\alpha'}$. Explicitly, it corresponds to the convolution of $f^{|\beta|}$, where $f(x,x) = x$ for all $x \in D$, and $f(x,y)$ is undefined everywhere else.

\paragraph*{Perfect Matchings and Subset Convolution.} To count perfect matchings, we store states $1$ or $0$ for every vertex in $\beta$, encoding whether the vertex is already matched or still unmatched in the subgraph induced by all vertices below $\beta$. The entry $T_\beta[\mu]$ for $\mu: \beta \rightarrow \{0,1\}$ is the number of matchings that agree precisely with $\mu$ on $\beta$ and match all vertices in the bags in the tree below $\beta$ not themselves contained in $\beta$. Choosing the partial function $f$ with $f(x,y) = x \lor y$ for $(x,y)\neq (1,1)$, and $f(1,1)$ undefined, gives the correct convolution for counting perfect matchings, and it can trivially be implemented in time $O^\ast(3^t)$. 

This specific kind of convolution has risen to some prominence in parameterized and exact algorithms, under the name of \emph{subset convolutions}~\cite{BjorklundHKK07}, which we will expand on below.
In particular, $O^\ast(2^t)$ time algorithms are known.

\paragraph*{Dominating Sets and the Covering Product.} 
Dominating sets of size $k$ can be counted by using layered tables $T_\beta[\sigma,i]$ for $\sigma: \beta \rightarrow 
\{\text{in},\text{future},\text{past}\}$ and $0 \leq i \leq k$. The semantics of the states encode that a vertex in $\beta$ is itself part of the dominating set (in), it is not part of the dominating set and will only be dominated in a future bag above the bag $\beta$ (future), or it is not part of the dominating set and is already dominated in a past bag below the bag $\beta$ (past). With this setup,  $T_\beta[\sigma,i]$ will be filled as to contain the number of partial dominating sets of size $i$ that dominate all vertices in the graph induced by the vertices below $\beta$, except for those in state future. To express the combination of states as a convolution, define 
\begin{align*}
& f(\text{in},\text{in}) & =\qquad & \text{in},\\
& f(\text{past},\text{future}) = f(\text{future},\text{past}) = f(\text{past},\text{past}) & =\qquad & \text{past},\\
& f(\text{future},\text{future}) & =\qquad & \text{future},
\end{align*}
and leave $f$ undefined everywhere else. Then, let $\tau_{ij}[z] = \sum_{f^{\beta}(x,y) = z} T_\alpha[x,i] \cdot T_{\alpha'}[y,j]$ and set $T_\beta[\sigma,\ell] = \sum_{ij} \tau_{ij}$, ranging over all pairs with $i+j+|\sigma^{-1}(\text{in})|=\ell$.
The convolutions can be naively implemented using $O^\ast(5^t)$ arithmetic operations. 

This example contains an interesting sub-case: If we restrict the domain of $f$ to $\{\text{future},\text{past}\}$, then $f$ gives rise to the so-called \emph{covering product}, which is closely related to the subset convolution mentioned above. This restriction lies at the heart of known improvements to $O^\ast(3^t)$ time~\cite{RooijBR09}.

\paragraph*{Longest Paths and XOR-Convolution.} 
For the sake of completeness, we mention that convolutions also occur in parameterized algorithms that do not exploit tree-decompositions.
For instance, let $D = \{0,1\}$ and let $f = \oplus$ be the Boolean XOR. The $n$-th power of the generalized $f$-convolution has made a prominent appearance in fixed-parameter algorithms for finding paths of length $k$ (parameterized by $k$)~\cite{Koutis08,Williams09,KoutisW16}. In these applications, the field $\mathbb Q$ is usually replaced by the field with $2^q$ elements, but this is immaterial for our purposes.
Spelled out, two vectors $u,v \in \mathbb Q^{\mathbb Z_2^n}$ are convolved through
\[
u \fconv[f] v = \sum_{x,y \in \mathbb Z_2^n} u(x)v(y) e_{x \oplus y}.
\]
This example is the simplest in a whole class of important convolutions, namely those where $D$ is a group and $f$ is the group operation. In the commutative case, this leads to multidimensional DFTs, while the general case is governed by the representation theory of finite groups~\cite{Umans19}.

\begin{figure}\centering
\includegraphics{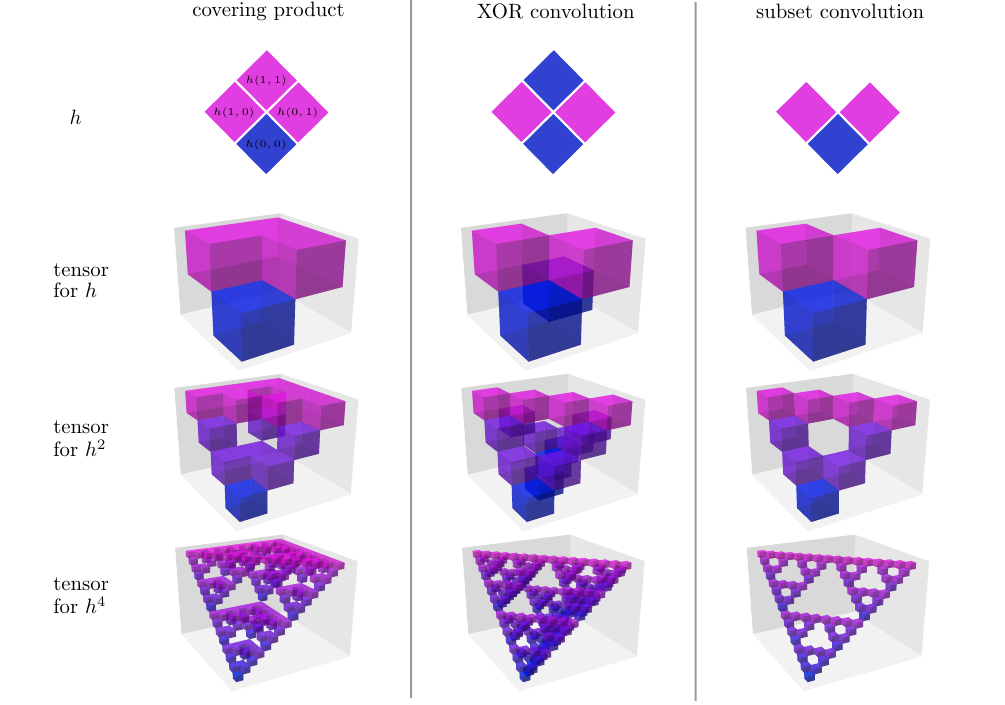}
\caption{Fix the base domain $B=\{0,1\}$ and one of the three base functions $h: B\times B\to B$ shown in the first row, where blue, pink and white squares correspond to the function values $0$, $1$, and undefined, respectively.
The covering product, XOR and subset convolutions are powers $f = h^k$ of the respective base function that induce vector-valued functions $t:D\times D\to \mathbb \{0,1\}^D$ for $D=B^k$.
The functions $t$ are depicted as tensors, i.e., $3$-dimensional arrays with $0/1$ entries, where $t(x,y)$ is stored in the vertical column $(x,y,\star)$, and a block at position $(x,y,z)$ indicates that the $z$-th entry of $t(x,y)$ equals $1$. (For aesthetic purposes, the block color encodes the Hamming weight of $z$; it does \emph{not} encode the array value, which is $0$ or $1$.)
}
\label{fig:tensors}
\end{figure}

\section{Algorithmic Results}
We now describe how to reformulate convolutions in terms of linear algebra, and then apply two algorithmic results that are well-known in the algebraic complexity literature. 
Our aim is to give an accessible, from-scratch account.

\subsection{Framework: Convolutions as Bilinear Maps}
\label{sec:framework}
For every function $f$, the convolution is a map of type $u \fconv v \,:\,\mathbb Q^D\times \mathbb Q^D\to \mathbb Q^D$. We observe that it is \emph{bilinear}, i.e., linear in both arguments: 
\begin{align*}
(u+u') \fconv v & =u \fconv v + u' \fconv v\\
u\fconv(v+v') & =u\fconv v+u\fconv v'.
\end{align*}
By modifying \eqref{eq:convdev} slightly, \emph{every} bilinear map $f:\mathbb Q^D\times \mathbb Q^D\to \mathbb Q^D$ can be viewed as a generalized convolution.
This generalized perspective may appear like a mathematical curiosity, but it is actually crucial for us.
Formally, we fix an arbitrary function $t : D \times D \to \mathbb Q^D$ and define
\begin{align} \label{eq:def-bilin}
u \fconv[t] v = \sum_{x,y\in D}u(x)v(y) t(x,y).
\end{align}

This recovers \eqref{eq:convdev} by setting $t(x,y)=e_{f(x,y)}$ whenever $f(x,y)$ is defined, and $t(x,y) = 0$ otherwise.
We will later require more general vectors than $e_d$ as outputs for $t$.

A natural notion of product structure can also be defined in this setting: We say that the function $t : D \times D \to \mathbb Q^D$ admits product structure if $D=B^k$ for some fixed domain $B$ and there is a fixed $b:B\times B\to\mathbb Q^B$ such that the vector $t((x_1,\ldots,x_k),(y_1,\ldots,y_k))$ has its entry at index $(z_1,\ldots,z_k) \in D^k$ given by 
\begin{equation}
b(x_1,y_1,z_1) \cdot  \ldots \cdot b(x_k,y_k,z_k).
\end{equation} 
The vector in $\mathbb Q^D$ defined this way is called the Kronecker product $b(x_1,y_1)\otimes \ldots \otimes b(x_k,y_k) \in \mathbb Q^D$ of the vectors $b(x_1,y_1),\ldots,b(x_k,y_k).$

\begin{remark}
    The data $(t(x,y)_k)_{x,y,k\in D} \in \mathbb Q^{D^3}$ often are referred to as the \emph{tensor} associated with a bilinear map. They are completely determined by the bilinear map; conversely, every such tensor defines a bilinear map. In the algebraic complexity literature, often, little reference is made to the bilinear map, and results are purely formulated in tensor language. We refrain from this impulse in the following for the sake of concreteness and interpretability.
\end{remark}

\subsection{Rank}
The linear-algebraic interpretation of convolutions by means of $t : D \times D \to \mathbb Q^D$ allows us to decompose $t$ into sums of simpler functions, leading to the notion of \emph{rank}:
A function $t : D\times D\to \mathbb Q^D$ has rank $1$ if $t(x,y)=a(x)b(y)c$ for some vectors $a,b,c \in \mathbb Q^D$.
The convolution $\fconv[t]$ for functions $t$ of rank $1$ simplifies drastically:
\begin{align}
u\fconv[t]v\  & =\ \sum_{x,y\in D}u(x)v(y)\underbrace{a(x)b(y)c}_{t(x,y)} =\ \underbrace{\left(\sum_{x\in D}u(x)a(x)\right)}_{\langle u,a\rangle}\underbrace{\left(\sum_{y\in D}v(y)b(y)\right)}_{\langle v,b\rangle}c.\label{eq:conv-rk1}
\end{align}
Using this formula, we can directly evaluate $u\fconv[t]v$ with $O(|D|)$ operations when $t$ has rank $1$.
More generally, we can evaluate convolutions $u\fconv[t]v$ with few operations when $t$ is a linear combination of a small number of functions of rank $1$:
The rank $\rank(t)$ of $t : D\times D \to \mathbb Q^D$ is the minimum $r\in \mathbb N$ such that $t = t_1+\ldots +t_r$ pointwise with each $t_i$ of rank $1$.
\begin{remark}\label{rem:trivialgo}
Using \eqref{eq:conv-rk1} and linearity, we can immediately compute $u \fconv[t]v$ with $O(\rank(t)\cdot |D|)$ operations.
This will however turn out not to be optimal in important cases.
\end{remark}

For arbitrary $f$, we have $\rank(t)\leq |D|^2$, since $t = \sum_{i,j \in D} t_{i,j}$ with the rank-$1$ functions $t_{i,j} = e_i (x)e_j(y)t(i,j)$ for $i,j\in D$.
Several important functions $t$ have strictly lower rank than $|D|^2$. 
This fact enables speed-ups, e.g., over the dynamic programming algorithms for perfect matchings and dominating sets outlined above.
We consider some examples in the following, with a particular focus on the interaction among product structure and rank.

\subsubsection*{Covering Product: Rank and Product Structure}
Letting $D = \{0,1\}^{k}$ and defining $f(x,y) = x \lor y$ component-wise for $x,y \in \{0,1\}^k$, the corresponding convolution operation
\[
u \fconv v
= \sum_{x,y \in \{0,1\}^k} u(x) v(y) e_{x \lor y}
\]
is the \emph{covering product} on the power set of $[k]$, represented by characteristic vectors. A naive implementation of this convolution uses $\Omega(3^k)$ arithmetic operations. It has been known since the 60s~\cite{Solomon1967} that this is not optimal, although the structural insights enabling this realization were only algorithmically exploited some 40 years later~\cite{BjorklundHKK07}.

\paragraph{Low Rank Lifts to Higher Powers.}
The decisive observation for the covering product can already be made for $k=1$, where $f = \lor$ and the corresponding function $t: D\times D\rightarrow \mathbb Q^D, (x,y) \to e_{x \lor y}$ has rank at most $3$, because its support contains only $3$ pairs. However, the actual rank of $t$ is $2$ rather than $3$: 
We have  $t = t_1 + t_2$ for $t_1(x,y) = a_1(x)b_1(y)c_1$ and $t_2(x,y) = a_2(x)b_2(y)c_2$, where
\begin{align*}
   & a_1(x) = b_1(y) = 1, & \text{ for } x,y \in \{0,1\}, \\
   & a_2(0) = b_2(0) = 1,\ a_2(1) = b_2(1) = 0, & \text{ and } \\
   & c_1 = e_1, ~ c_2 = e_0 - e_1.
\end{align*}

One crucial feature of the rank-based approach is that, whenever dealing with product-structured $t$ with base function $b$, proving rank upper bounds for fixed $t$ already implies upper bounds for all powers $b^k$ of $b$.
In other words, low-rank decompositions for $b$ ``lift'' to powers $b^k$.
\begin{lemma}
    If $t = b^k$, then we have
    \begin{equation}\label{eq:kronecker-rank-ub}
        \rank(t) \leq (\rank(b))^k.
    \end{equation}    
\end{lemma}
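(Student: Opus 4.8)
The plan is to take an optimal rank-$1$ decomposition of the base function $b$ and exploit the Kronecker structure of $t=b^k$ to lift it to a decomposition of $t$ into $(\rank(b))^k$ rank-$1$ terms, generalizing exactly the $k=1$ reasoning used for the covering product above.

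First I would fix $r=\rank(b)$ and write out a witnessing decomposition: there are vectors $\phi_i,\psi_i,\chi_i\in\mathbb Q^B$, for $i\in[r]$, such that $b(x,y)=\sum_{i=1}^r \phi_i(x)\,\psi_i(y)\,\chi_i$ for all $x,y\in B$ (recall that rank-$1$ means the two arguments contribute scalars $\phi_i(x),\psi_i(y)$ multiplying a fixed vector $\chi_i$). Next I would invoke the product structure of $t$: by definition $t(X,Y)=b(x_1,y_1)\otimes\cdots\otimes b(x_k,y_k)$ for $X=(x_1,\ldots,x_k)$ and $Y=(y_1,\ldots,y_k)$. Substituting the decomposition into each of the $k$ factors and expanding the Kronecker product by multilinearity (distributing the $k$ independent sums, and pulling the scalar coefficients out of the Kronecker product) collapses the product of sums into a single sum over all multi-indices $I=(i_1,\ldots,i_k)\in[r]^k$:
\[
t(X,Y)=\sum_{I\in[r]^k}\Big(\prod_{j=1}^k\phi_{i_j}(x_j)\Big)\Big(\prod_{j=1}^k\psi_{i_j}(y_j)\Big)\,\big(\chi_{i_1}\otimes\cdots\otimes\chi_{i_k}\big).
\]

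Then I would read off that each summand is rank-$1$ in the sense of the definition. For a fixed multi-index $I$, the map $X\mapsto\prod_j\phi_{i_j}(x_j)$ is precisely the coordinate evaluation of the Kronecker product $A_I:=\phi_{i_1}\otimes\cdots\otimes\phi_{i_k}\in\mathbb Q^D$, and likewise $Y\mapsto\prod_j\psi_{i_j}(y_j)$ evaluates $B_I:=\psi_{i_1}\otimes\cdots\otimes\psi_{i_k}\in\mathbb Q^D$; the trailing vector $C_I:=\chi_{i_1}\otimes\cdots\otimes\chi_{i_k}$ also lies in $\mathbb Q^D$ since $D=B^k$. Hence the $I$-th summand equals $A_I(X)\,B_I(Y)\,C_I$ with $A_I,B_I,C_I\in\mathbb Q^D$, which is exactly the rank-$1$ template. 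As there are $r^k$ multi-indices, this exhibits $t$ as a sum of $r^k$ rank-$1$ functions, yielding $\rank(t)\le r^k=(\rank(b))^k$.

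I do not expect a genuine obstacle, as the entire content is the multilinear expansion of the Kronecker product together with the observation that Kronecker products of $\mathbb Q^B$-vectors live in $\mathbb Q^{B^k}=\mathbb Q^D$. The only point requiring care is the bookkeeping of the multi-index expansion: I must distribute each of the three per-coordinate factors independently, so that the cross terms range over all of $[r]^k$ rather than only the diagonal, and then confirm that each resulting summand genuinely has the product (rank-$1$) form demanded by the definition rather than merely being rank-$1$ after reindexing.
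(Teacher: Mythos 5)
Your proof is correct and follows essentially the same route as the paper: both expand the rank-$r$ decomposition of $b$ across the $k$ Kronecker factors, collapse the product of sums into a single sum over multi-indices in $[r]^k$ (the paper writes these as maps $\iota:[k]\to[r]$), and identify each summand as a rank-$1$ term built from Kronecker products of the witnessing vectors, exactly as in the paper's displays~\eqref{eq:abc} and~\eqref{eq:kronecker-rank-decomp}. Your closing caution about distributing the three per-coordinate factors so that cross terms range over all of $[r]^k$ rather than the diagonal is precisely the bookkeeping the paper performs implicitly.
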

\begin{proof}
For brevity, set $r = \rank(b)$, and denote with $\vec x, \vec y,\vec z$ elements of $D^k$. Then~\eqref{eq:kronecker-rank-ub} follows because $b(x,y) = \sum_{i=1}^{r} a_i(x) b_i(y) c_i$ implies 
\begin{align*}
& b^k(\vec x, \vec y, \vec z) =\  b(x_1,y_1,z_1)\cdots b(x_k,y_k,z_k) \\ 
= \ & \sum_{\iota: [k] \rightarrow [r]} a_{\iota(1)}(x_1)b_{\iota(1)}(y_1) c_{\iota(1)}(z_1) \cdots a_{\iota(k)}(x_k)b_{\iota(k)}(y_k) c_{\iota(k)}(z_k).
\end{align*}
Then, defining for $\iota: [k] \rightarrow [r]$ the maps
\begin{equation}\label{eq:abc}
A_{\iota}(\vec x)=\prod_{j=1}^{k} a_{\iota({j})}(x_{j}),\ B_{\iota}(\vec y)=\prod_{j=1}^{k} b_{\iota({j})}(y_{j}),\ \text{ and } C_{\iota}(\vec z)=\prod_{j=1}^{k}c_{\iota({j})}(z_{j})
\end{equation}
allows us to obtain a decomposition of $b^k$ via 
\begin{align} \label{eq:kronecker-rank-decomp}
b^k((x_1,\ldots,x_k),(y_1,\ldots,y_k)) = \sum_{\iota: [k] \rightarrow [r] } A_{\iota}(x_1,\ldots,x_k) B_{\iota}(y_1,\ldots,y_k) C_{\iota},
\end{align}
proving the inequality \eqref{eq:kronecker-rank-ub}.
\end{proof}

Hence, \eqref{eq:kronecker-rank-ub} allows to deduce from the case $k=1$ that the rank of general covering products with $k > 1$ is bounded by $2^k$. By Remark~\ref{rem:trivialgo}, this would only give an algorithm for general covering products running in time $O^\ast(4^k)$. However, an $O^\ast(2^k)$ time bound is implied by Yates' algorithm, which we state as Theorem~\ref{thm:yates} and describe in Section~\ref{thm:yates}. 

Moreover, this decomposition can be used to show that also the function $f$ relevant for counting dominating sets on bounded-treewidth graphs described above can be computed using $O^\ast(3^t)$ time,
as was first shown by Van Rooij \emph{et al.}~\cite{RooijBR09}. Importantly, this fact is available without any semantic insights into the definition of the state set, and can be derived in an entirely mechanic manner, simply by deriving the best possible rank decomposition for the corresponding functions on vertex states and lifting them to products of bag states. 

\begin{remark}
    For completeness, note that for the rank of the XOR-convolution on $n$-bit strings defined above, the function $t$ associated with the corresponding $f$ has rank $2^n$. This is an essential prerequisite for the Fast Fourier Transform.
Again, considering the case $n=1$ suffices to derive higher powers. For this, set $t(0,1) = t(1,0) = e_1$ and $t(1,1) = t(0,0) = e_0$.
Picking 
\begin{align*}
    & a_1(x) = b_1(x) = c_1(x) = (-1)^x, \\ 
    & a_2(x) = b_2(x) = c_2(x) = \phantom{(-}1 
\end{align*}
for $x\in \{0,1\}$ yields $t = t_1 + t_2$,
where again $t_i(x,y) = a_i(x)b_i(y)c_i$, $i=1,2$.
\end{remark}

\paragraph{Lifting is Not Always Enough.}
In contrast to the covering product, for the closely related \emph{subset convolution}, where we set $f(1,1)$ to be undefined instead of $1$, no useful upper bound is available in the $k=1$ case. Instead, the best rank upper bound for $k=1$ is the trivial bound of
\begin{align}
    \rank(t) \leq 3,
\end{align}
where $t: \{0,1\} \times \{0,1\} \rightarrow \mathbb Q^{\{0,1\}}$ is the corresponding function for subset convolution. 

However, despite this worse bound for the simplest case, it is still true that the rank of subset convolution for general $k$ can be bounded by $O^\ast(2^k)$, which shows that \eqref{eq:kronecker-rank-ub} is not tight in general. The improved upper bound can be observed by expressing subset convolution as a polynomial-sized sum of covering products of vectors with fixed support sizes (which was termed ``ranked subset convolution'' by Björklund et al.~\cite{BjorklundHKK07}), each of which has rank at most $2^k.$ The best known upper bound is $O(2^k k),$ which can be achieved by polynomial interpolation of so-called border decompositions of the function in question, cf.~\cite{BjorklundK24}. This allows to implement operations at join nodes in the outlined algorithm for counting perfect matchings in time $O^\ast(2^t).$

In the next section, we will see in Theorem~\ref{thm:strassen} that product structure can generally be used in ways that go beyond lifting low-rank decompositions of the fixed base functions.

\subsection{Rank and Computation}
Recall that we can compute $u \fconv[t]v$ with $O(\rank(t)\cdot |D|)$ operations by Remark~\ref{rem:trivialgo}.
If $t$ admits product structure, faster algorithms are known. In particular, Yates' algorithm~\cite{yates} accomplishes the following; we give a proof in Section~\ref{sec:yates}:
\begin{theorem} \label{thm:yates}
Let $t : D\times D \to \mathbb Q^D$ have product structure with base function $h : B\times B\to \mathbb Q^B$ and $D=B^k$ for $k\in \mathbb N$. 
Let $h = h_1+\ldots+h_r$ with each $h_i$ of rank $1$ be given as input, where we assert that $|B| \in O(1)$ and $|B|\leq r \leq |B|^2$.
Then there is an $O(r^{k} k)$ time algorithm to compute $u \fconv[t]v \in \mathbb Q^D$ on input $u,v\in \mathbb Q^D$.
\end{theorem}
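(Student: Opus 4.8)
The plan is to evaluate the lifted rank decomposition of $t=h^k$ without ever materializing its $r^k$ rank-one terms. Writing each rank-one summand of $h$ as $h_i(x,y)=a_i(x)\,b_i(y)\,c_i$ with $a_i,b_i,c_i\in\mathbb Q^B$, the construction behind \eqref{eq:kronecker-rank-decomp} expresses $t$ as
\begin{equation*}
t(\vec x,\vec y,\vec z)=\prod_{j=1}^k h(x_j,y_j,z_j)=\sum_{\iota:[k]\to[r]}\Big(\prod_{j=1}^k a_{\iota(j)}(x_j)\Big)\Big(\prod_{j=1}^k b_{\iota(j)}(y_j)\Big)\Big(\prod_{j=1}^k c_{\iota(j)}(z_j)\Big).
\end{equation*}
Substituting this into \eqref{eq:def-bilin} and regrouping the sum by $\iota$ separates the three vectors, yielding
\begin{equation*}
(u\fconv[t]v)(\vec z)=\sum_{\iota:[k]\to[r]}\hat u(\iota)\,\hat v(\iota)\prod_{j=1}^k c_{\iota(j)}(z_j),
\end{equation*}
where $\hat u(\iota)=\sum_{\vec x}u(\vec x)\prod_j a_{\iota(j)}(x_j)$ and $\hat v(\iota)=\sum_{\vec y}v(\vec y)\prod_j b_{\iota(j)}(y_j)$. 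Thus the computation factors into three stages: a forward transform $u\mapsto\hat u$ and $v\mapsto\hat v$ (each producing an array indexed by $\iota\in[r]^k$), a pointwise product $\hat w(\iota)=\hat u(\iota)\hat v(\iota)$ costing $O(r^k)$, and a backward transform $\hat w\mapsto w$ defined by $w(\vec z)=\sum_\iota\hat w(\iota)\prod_j c_{\iota(j)}(z_j)$.

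The key point is that $\hat u$ is the image of $u$ under the $k$-th Kronecker power of the single $r\times|B|$ matrix with entries $a_i(x)$, so it can be computed one coordinate at a time rather than by evaluating each of the $r^k$ inner products separately (which would cost $r^k|B|^k$). Concretely, I would maintain intermediate arrays $u^{(0)}=u,u^{(1)},\ldots,u^{(k)}=\hat u$, where $u^{(j)}$ is indexed by $[r]^j\times B^{k-j}$ and is obtained from $u^{(j-1)}$ by transforming only the $j$-th coordinate:
\begin{equation*}
u^{(j)}(i_1,\ldots,i_j,x_{j+1},\ldots,x_k)=\sum_{x_j\in B}a_{i_j}(x_j)\,u^{(j-1)}(i_1,\ldots,i_{j-1},x_j,x_{j+1},\ldots,x_k).
\end{equation*}
Each entry of $u^{(j)}$ is a sum over $|B|$ terms, and $u^{(j)}$ has $r^j|B|^{k-j}$ entries, so stage $j$ costs $O(r^j|B|^{k-j+1})$. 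Since $r\ge|B|$, these per-stage costs are increasing in $j$ and dominated by the final stage, giving $\sum_{j=1}^k O(r^j|B|^{k-j+1})=O(k\,|B|\,r^k)=O(k\,r^k)$ using $|B|\in O(1)$. The same sweep computes $\hat v$ within the same bound.

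The backward transform is handled symmetrically: $w$ is the $k$-th Kronecker power of the $|B|\times r$ matrix with entries $c_i(z)$ applied to $\hat w$, so I would again sweep coordinate by coordinate, now contracting each coordinate from $[r]$ to $B$ via $w^{(j)}(z_1,\ldots,z_j,i_{j+1},\ldots,i_k)=\sum_{i_j\in[r]}c_{i_j}(z_j)\,w^{(j-1)}(\ldots)$. Here stage $j$ has $|B|^jr^{k-j}$ entries each costing $O(r)$, so its cost is $O(|B|^jr^{k-j+1})$; as $|B|\le r$ these costs are now decreasing in $j$, and the sum is again $O(k\,|B|\,r^k)=O(k\,r^k)$. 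Adding the three stages gives the claimed $O(r^k\,k)$ bound overall.

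I expect the main obstacle to be the bookkeeping of the cost analysis rather than the algebra: one must verify that in both the expanding direction ($B^k\to[r]^k$) and the contracting direction ($[r]^k\to B^k$) the geometric sum of per-stage costs is dominated by the $r^k$ term, which is exactly where the hypotheses $|B|\in O(1)$ and $|B|\le r$ are used (they keep every intermediate array of size $O(r^k)$ and force the per-stage costs to peak at a single endpoint). A secondary point worth spelling out is the correctness of the coordinate-wise sweep, i.e., that iterating the single-coordinate transform indeed realizes $\hat u(\iota)=\sum_{\vec x}u(\vec x)\prod_j a_{\iota(j)}(x_j)$; this follows by a straightforward induction on $j$, factoring $\prod_j a_{\iota(j)}(x_j)$ across the already-transformed and not-yet-transformed coordinates.
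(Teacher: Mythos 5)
Your proposal is correct and is essentially the paper's own proof: your intermediate arrays $u^{(j)}$ indexed by $[r]^j\times B^{k-j}$ are precisely the layer-$j$ vectors $u_\iota$ in the paper's $r$-ary tree presentation of the forward transform, and your backward coordinate-wise sweep applying the Kronecker power of the $|B|\times r$ matrix with entries $c_i(z)$ coincides with the paper's leaf-to-root aggregation $f_{\iota}=\sum_{t} c_t \otimes f_{\iota t}$, with the pointwise products $\hat u(\kappa)\hat v(\kappa)$ appearing there as the leaf values. Your cost accounting (per-stage cost $O(r^j|B|^{k-j+1})$, summed to $O(r^k k)$ using $|B|\le r$ and $|B|\in O(1)$) also matches the paper's geometric-series analysis, and is in fact slightly more careful in the boundary case $r=|B|$, where the extra factor $k$ is genuinely needed.
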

Generally, with $\omega < 2.372$ denoting the exponent of matrix multiplication, the following surprising theorem follows from known facts in algebraic complexity theory~\cite{Strassen1988}; we give a self-contained proof in Section~\ref{sec:strassen}.
\begin{theorem} \label{thm:strassen}
    If $t:D^k\times D^k \to \mathbb Q^{D^k}$ admits product structure, then $\fconv[t]$ can be computed with $O^\ast(|D|^{2 \omega /3\cdot k}) = O(|D|^{1.582k})$ operations.
    Both $|D|$ and $k$ are considered part of the input.
\end{theorem}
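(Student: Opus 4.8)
\textbf{Proof plan for Theorem~\ref{thm:strassen}.}

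The plan is to reduce the product-structured convolution $\fconv[t]$ with $t=b^k$ to a single rectangular matrix multiplication, and then invoke fast matrix multiplication via the tensor formulation. The starting point is the observation that, after rearranging \eqref{eq:def-bilin}, computing $u\fconv[t]v$ amounts to contracting the $|D|^3$-entry tensor of $t$ against the two input vectors. Since $t=b^k$ is a $k$-fold Kronecker power of the fixed base tensor $b$ on $B$, its tensor is the $k$-th tensor power of the $|B|\times|B|\times|B|$ tensor associated to $b$. The core of the argument is therefore to relate this tensor power to powers of the matrix multiplication tensor, and this is exactly where Strassen's results from algebraic complexity theory enter.

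First I would recall the notion of tensor restriction (or degeneration): to compute a bilinear map whose tensor is $T$, it suffices to embed $T$ into another tensor $T'$ for which we already have a fast bilinear algorithm, since computing against $T'$ then computes against $T$ after fixing/projecting coordinates. The key classical fact I would cite from \cite{Strassen1988} is that \emph{any} fixed tensor $b$ over $\mathbb Q$ of format $n\times n\times n$ satisfies that its tensor power $b^{\otimes k}$ restricts to a matrix multiplication tensor of comparable size; quantitatively, Strassen's work on the asymptotic behavior of tensor powers yields that $b^{\otimes k}$ can be computed (up to $\mathrm{poly}(k)$ and lower-order factors) within the cost of multiplying matrices of dimension $n^{2k/3}$. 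Here the exponent $2/3$ is precisely the split that balances the three tensor legs: each of the three factor spaces of $b^{\otimes k}$ has dimension $n^k=|D|$, and routing these through the three legs of a single matrix multiplication tensor distributes the total $n^{3k}$ ``volume'' of the tensor so that each matrix dimension is $n^{3k\cdot(1/3)\cdot(2/3)}=n^{2k/3}$. I would make this balancing explicit by writing out the embedding of the three index sets $B^k$ into the three index sets of a matrix multiplication tensor $\langle n^{2k/3},n^{2k/3},n^{2k/3}\rangle$.

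Once the embedding is set up, the running time follows by plugging in the rectangular (in this balanced case, square) matrix multiplication bound: multiplying $m\times m$ matrices costs $O(m^{\omega+o(1)})$, so with $m=|B|^{2k/3}=|D|^{2/3}$ we obtain
\begin{equation*}
O\!\left(|D|^{\,2\omega/3}\cdot \mathrm{poly}(k)\right)
= O^\ast\!\left(|B|^{\,2\omega k/3}\right)
= O\!\left(|D|^{\,1.582\,k}\right),
\end{equation*}
using $\omega<2.372$, which matches the claimed bound. I would then verify that the reduction is effective: because $|B|\in O(1)$ is fixed while $k$ and $|D|=|B|^k$ are the input, all the embedding maps and the restriction data for $b$ are of constant size and can be precomputed, so only the final matrix multiplication contributes the stated asymptotic cost, and the $\mathrm{poly}(k)$ overhead from assembling the $|D|^{2/3}$-dimensional matrices and reading off the answer is absorbed into the $O^\ast$ notation.

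The main obstacle I anticipate is making the tensor-theoretic statement of Strassen fully self-contained and coordinate-explicit, as the paper promises to assume no abstract algebra. The cited results are typically phrased in the language of tensor rank, restriction, and the asymptotic spectrum, whereas here I would need to exhibit the embedding of $b^{\otimes k}$ into the balanced matrix multiplication tensor by hand, tracking exactly how tuples in $B^k$ map to matrix row/column indices and how contracting $u$ and $v$ against the embedded tensor reproduces $u\fconv[t]v$. The delicate point is the $2/3$ exponent: one must argue that the three legs can be balanced simultaneously rather than, say, collapsing two legs into one matrix dimension (which would give the weaker exponent $\omega$ on $|D|$). Verifying that Strassen's construction achieves the symmetric $2/3$ split for an \emph{arbitrary} base tensor $b$ — not merely for tensors already of matrix-multiplication type — is the crux of the proof and the step I would spend the most care on.
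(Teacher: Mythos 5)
You have the right route in outline: the paper's proof is exactly Strassen's reduction of a (power of a) convolution tensor to one balanced matrix multiplication, and your cost accounting at the end is correct, modulo a notational slip (you write $m=|B|^{2k/3}=|D|^{2/3}$ with $|D|=|B|^k$, but in the theorem's notation $D$ is the base domain, so the two occurrences of $|D|$ in your displayed bound mean different things). However, there is a genuine gap at exactly the point you flag as the crux: the embedding of $b^{\otimes k}$ into $\langle n^{2k/3},n^{2k/3},n^{2k/3}\rangle$ is asserted via a citation and a volume-counting heuristic, but never constructed, and the heuristic does not by itself yield it --- knowing that the $n^{3k}$ ``volume'' could in principle be distributed evenly over three legs says nothing about whether the tensor's support can actually be routed through a matrix multiplication without collisions. (Also, ``$b^{\otimes k}$ restricts to a matrix multiplication tensor'' has the direction inverted: what you need is that $b^{\otimes k}$ is a restriction \emph{of} the matrix multiplication tensor.) The missing idea, which resolves precisely the worry in your last sentence about arbitrary base tensors, is a cyclic role-rotation trick applied to the \emph{third} power rather than directly to the $k$-th: group $D^k=(E)^3$ with $E=D^{n}$, $n=\lceil k/3\rceil$, black-box the internal structure, and show that for \emph{any} partial function $f$ on any domain $E$, the convolution by $f^3$ factors through multiplication of $|E|^2\times|E|^2$ matrices. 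This is the paper's Lemma~\ref{lem:fastmult}, with an explicit construction: set $U_{a,b;i,j}=\sum_{\ell:\, f(\ell,j)=b} u_{a,i,\ell}$ and $V_{i,j;c,d}=\sum_{\ell':\, f(i,\ell')=d} v_{c,\ell',j}$, multiply, and project via $\phi(e_{a,b;c,d})=e_{f(a,c),b,d}$ (zero if $f(a,c)$ is undefined). The three copies of $f$ inside $f^3$ play the three rotated roles --- one is contracted into $U$, one into $V$, one into $\phi$ --- so each of the three coordinate blocks contributes one index of size $|E|$ to each matrix dimension, which is exactly what makes the symmetric $2/3$ split work with no assumption whatsoever on $f$.

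Two further points of comparison. First, your plan works at the level of the full $k$-th tensor power and implicitly needs Strassen's asymptotic machinery to balance it; the paper's proof needs only the constant-size third-power lemma plus the trivial regrouping $D^k\approx(D^{\lceil k/3\rceil})^3$, which is why it stays coordinate-explicit and self-contained, as the paper promises. Second, as the paper remarks after the proof, product structure of the base function is not actually used in the embedding at all --- only the expressibility of the domain as a cube is --- so the ``arbitrary base tensor'' generality you were worried about comes for free once you have the third-power lemma, rather than being the hard part.
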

Moreover, the \emph{asymptotic rank conjecture}~\cite{Strassen1994}, a significant strengthening of the conjecture $\omega = 2$, implies optimal exponential running time of all product-type convolutions.
The conjecture essentially states that every function $t:D^k\times D^k \to \mathbb Q^{D^k}$ that admits product structure has rank bounded by $O(|D|^{k+o(k)})$;
the low-rank decompositions asserted by this conjecture generally need not have the product form described in the proof of \eqref{eq:kronecker-rank-ub}. 

This conjecture yields the following implication, by invoking Theorem~\ref{thm:yates} on arbitrarily large constant powers of the base function underlying $t$.
\begin{theorem} \label{thm:conv-asymprank}
    If $t:D^k\times D^k \to \mathbb Q^{D^k}$ admits product structure and the asymptotic rank conjecture holds, then $\fconv[t]$ can be computed with $O(|D|^{k + o(k)})$ operations.
\end{theorem}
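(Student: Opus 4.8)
The plan is to combine the asymptotic rank conjecture with Yates' algorithm (Theorem~\ref{thm:yates}), applied not to the base function $b$ itself but to a large constant power of it. The key insight is that Theorem~\ref{thm:yates} runs in time $O(r^k k)$ where $r$ is the \emph{given} rank of a decomposition of the base function. If we could take $r$ as small as $|B|^{1+o(1)}$, we would immediately get running time $O(|B|^{(1+o(1))k}) = O(|D|^{1+o(1)})$. The asymptotic rank conjecture does not directly give us a rank-$|B|^{1+o(1)}$ decomposition of $b$; rather, it bounds the rank of the \emph{powers} $b^m$ by $O(|B|^{m+o(m)})$. So the trick is to regroup: view $D = B^k$ as $(B^m)^{k/m}$ for a large constant $m$, treat $t = b^k$ as a product-structured function with base domain $\tilde B = B^m$ and base function $\tilde b = b^m$, and run Yates' algorithm on this coarser product structure.

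First I would fix a small constant $\varepsilon > 0$ and invoke the asymptotic rank conjecture to choose a constant $m = m(\varepsilon)$ large enough that $b^m$ admits a decomposition of rank $\tilde r \leq |B|^{(1+\varepsilon)m}$; such an $m$ exists because the conjecture asserts $\rank(b^m) \le |B|^{m+o(m)}$, so the ratio $\log_{|B|}\rank(b^m)/m$ tends to $1$. Crucially, since $m$ is a constant, this decomposition of $b^m$ can be computed (or is simply regarded as a fixed constant-size object) in constant time, and the new base domain $\tilde B = B^m$ has constant size $|B|^m = O(1)$, satisfying the hypotheses of Theorem~\ref{thm:yates}. I also need $k$ to be divisible by $m$; if not, I pad the product structure with a few trivial coordinates (or handle the remaining $k \bmod m < m$ coordinates by brute force at an extra factor of $|B|^{O(m)} = O(1)$ cost), which does not affect the asymptotics.

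Next I would apply Theorem~\ref{thm:yates} to the function $t = (b^m)^{k/m}$ with base function $\tilde b = b^m$, base domain $\tilde B = B^m$, and exponent $\tilde k = k/m$, using the given rank-$\tilde r$ decomposition. This yields running time
\begin{equation}
O\!\left(\tilde r^{\,\tilde k}\cdot \tilde k\right) = O\!\left(\bigl(|B|^{(1+\varepsilon)m}\bigr)^{k/m}\cdot k\right) = O\!\left(|B|^{(1+\varepsilon)k}\cdot k\right) = O\!\left(|D|^{1+\varepsilon}\cdot k\right).
\end{equation}
Since $\varepsilon > 0$ was arbitrary and $k$ is polynomial, letting $\varepsilon \to 0$ (i.e., taking $m \to \infty$ slowly as a function of $k$, or simply noting that for every $\varepsilon$ we get exponent $1+\varepsilon$) gives the claimed bound $O(|D|^{k+o(k)})$ after rewriting in terms of $|D| = |B|^k$; more precisely, the additive $o(k)$ in the exponent of $|D|$ absorbs both the $\varepsilon k$ slack and the polynomial factor $k$.

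The main obstacle I anticipate is bookkeeping the interaction between the conjecture's $o(m)$ term and the final $o(k)$ bound, and making the ``arbitrarily large constant power'' argument rigorous. In particular, one must be careful that the decomposition of $b^m$ whose existence the conjecture guarantees is genuinely usable as \emph{input} to Yates' algorithm: the theorem requires the rank-$1$ summands to be provided explicitly, and while for each fixed $m$ this is a constant-size object, one should either argue it is found by exhaustive search in time depending only on $m$ (hence $O(1)$), or carefully state that the algorithm is non-uniform in the sense that it depends on a precomputed decomposition. I would also verify that the hypothesis $|\tilde B| \le \tilde r \le |\tilde B|^2$ of Theorem~\ref{thm:yates} is met—the lower bound $\tilde r \ge |\tilde B| = |B|^m$ is automatic, and the upper bound $\tilde r \le |\tilde B|^2 = |B|^{2m}$ holds since the rank of any function never exceeds $|\tilde B|^2$. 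These checks are routine but essential for the reduction to Yates to go through cleanly.
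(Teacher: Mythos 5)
Your proposal is correct and follows exactly the route the paper intends: the paper derives Theorem~\ref{thm:conv-asymprank} ``by invoking Theorem~\ref{thm:yates} on arbitrarily large constant powers of the base function,'' which is precisely your regrouping of $D=B^k$ as $(B^m)^{k/m}$ with the conjectured low-rank decomposition of $b^m$ fed to Yates' algorithm. Your additional care about divisibility padding, the hypothesis $|\tilde B|\le\tilde r\le|\tilde B|^2$, and the (non-)uniformity of obtaining the decomposition fills in bookkeeping the paper leaves implicit.
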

The asymptotic rank conjecture is in fact even \emph{equivalent} (within an algebraic model of computation) to the claim that all convolutions can be computed in the stated running time, and we refer to the recent literature connecting this conjecture to the theory of fine-grained algorithms for further background~\cite{BjorklundK24,BjorklundCHKP25,Pratt24}.

\section{Proof of Theorem~\ref{thm:strassen}: Bilinear Split-and-List}
\label{sec:strassen}
We present Strassen's~\cite{Strassen1988} proof of Theorem~\ref{thm:strassen},
adopting a concrete presentation of the more abstract original that sacrifices some generality.
To recall, given a partial function $f: D \times D \rightarrow D$, its third power $f^3: D^3 \times D^3 \rightarrow D^3$ is given by 
\[
f^3(i_1,i_2,i_3;j_1,j_2,j_3) := f^3((i_1,i_2,i_3),(j_1,j_2,j_3))= (f(i_1,j_1),f(i_2,j_2),f(i_3,j_3))
\]
and the $f^3$-convolution of $u,v \in \mathbb Q^{D^3}$ reads 
\[
u \fconv[f^3] v = \sum_{\substack{i_1,i_2,i_3 \in D \\ j_1,j_2,j_3 \in D}} u_{i_1,i_2,i_3} v_{j_1,j_2,j_3} e_{f(i_1,j_1), f(i_2,j_2), f(i_3,j_3)},
\]
again only summing over tuples where $f$ is defined.
Strassen's construction relies on \emph{matrix multiplication}, specifically, of $D^2 \times D^2$ matrices.
For $U, V \in \mathbb Q^{D^2 \times D^2}$, we interpret the entries of $U$ and $V$ as a flattened vector of length $D^4$.
This allows us to understand matrix multiplication as a $t$-convolution, where 
\[
t: D^4 \times D^4 \rightarrow \mathbb Q^{D^4},\ 
t(a,b,i,j; i,j,c,d) = e_{a,b,c,d},
\]
and $t$ sends any other pair of 4-tuples not in this form to $0$. 
It follows that
\[
U \fconv[t] V = U\cdot V,
\]
interpreting $U,V$ as vectors of length $D^4$ on the left, and as matrices of size $D^2\times D^2$ on the right. Strassen discovered that convolution on $D^3$ by $f^3$ ``fits into'' $D^2 \times D^2$ matrix multiplication:
\begin{lemma} \label{lem:fastmult}
    Let $u,v \in \mathbb Q^{D^3}$. Then, one can compute matrices $U,V \in \mathbb Q^{D^2 \times D^2}$ as well as a linear mapping $\phi: \mathbb Q^{D^4} \rightarrow \mathbb Q^{D^3}$ such that 
    \[
        u \fconv[f^3] v = \phi(U \cdot V).
    \]
    Moreover, the computation of $U$ and $V$ as well as the evaluation of $\phi$ can be performed with $O(|D|^4)$ arithmetic operations.
\end{lemma}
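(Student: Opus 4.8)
The plan is to exhibit $\fconv[f^3]$ as a \emph{restriction} of the $D^2\times D^2$ matrix-multiplication map that respects the three roles of the two inputs and the output: I will produce linear maps $u\mapsto U$ and $v\mapsto V$ together with one fixed linear $\phi$ so that $u\fconv[f^3]v=\phi(U\cdot V)$. Since $f^3$ is the third Kronecker power of the base-convolution map $t_f:D\times D\to\mathbb Q^D$, $(x,y)\mapsto e_{f(x,y)}$, the first idea is to embed a single-coordinate $f$-convolution into matrix multiplication and tensor three copies. The basic building block is that one $f$-convolution embeds into a \emph{degenerate} matrix multiplication in which exactly one of the three dimensions is trivial (size $1$): writing it as an outer product followed by a projection realizes it inside such a degenerate product, and by symmetry one may freely choose \emph{which} of the three dimensions degenerates while still keeping the left factor linear in $u$, the right factor linear in $v$, and the constraint $f(x,y)=z$ baked into one of the three maps.

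The crucial point, and the only genuinely clever step, is how the three copies are combined. If one uses the \emph{same} degenerate embedding for all three coordinates, the three trivial dimensions pile up in a single mode, and the Kronecker product collapses to $\langle |D|^3,1,|D|^3\rangle$ — that is, the trivial outer-product algorithm, with no gain. Instead I would use three \emph{different} role-respecting embeddings of $t_f$, cycling which dimension is trivial: for coordinate $1$ bake the relation $f(x,y)=z$ into the left matrix and make the contraction dimension trivial; for coordinate $2$ bake it into the right matrix; for coordinate $3$ bake it into the output map $\phi$. All three keep the left factor $\sim u$, the right $\sim v$, and the output $\sim z$, so their Kronecker product is again role-respecting; but because the three trivial dimensions now land in three \emph{different} modes, each of the three Kronecker dimensions has size $|D|\cdot|D|\cdot 1=|D|^2$ rather than $|D|^3$. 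This yields a role-respecting embedding of $f^3$ into $\langle|D|^2,|D|^2,|D|^2\rangle$, exactly the matrix-multiplication convolution $t$ defined above.

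Reading off the resulting formulas gives explicit linear maps, e.g.\ $U_{(r_1,r_3),(s_1,s_2)}=\sum_{x:\,f(x,s_1)=r_1}u_{x,s_2,r_3}$, a symmetric expression for $V$, and the output map $\phi(M)_{(l_1,l_2,l_3)}=\sum_{x,y:\,f(x,y)=l_3}M_{(l_1,x),(l_2,y)}$. The defining identity $u\fconv[f^3]v=\phi(U\cdot V)$ is bilinear in $u,v$, so it suffices to verify it on rank-one inputs $u=x^{(1)}\otimes x^{(2)}\otimes x^{(3)}$ and $v=y^{(1)}\otimes y^{(2)}\otimes y^{(3)}$, where it reduces to the three single-coordinate embeddings multiplying out correctly; the general case then follows by multilinearity. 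Finally, each of $U$, $V$, and $\phi$ has $O(|D|^4)$ output entries, and each entry is a partial sum of inputs indexed by preimages of $f$; evaluating these by a single bucketing pass over the relevant index set (distributing each input into the bucket determined by the appropriate value of $f$) shows that all three can be computed in $O(|D|^4)$ arithmetic operations, which completes the proof. The main obstacle is purely the combinatorial bookkeeping of the cyclic construction, namely checking that the three embeddings are simultaneously role-respecting and place their trivial dimensions in distinct modes; once this is set up, the verification and the operation count are routine.
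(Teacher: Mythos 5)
Your proposal is correct and is essentially the paper's own construction: up to a cyclic relabeling of which coordinate's constraint is baked into which factor, your $U_{(r_1,r_3),(s_1,s_2)}=\sum_{x:\,f(x,s_1)=r_1}u_{x,s_2,r_3}$, the symmetric $V$, and $\phi$ coincide with the paper's $U_{a,b;i,j}=\sum_{\ell:\,f(\ell,j)=b}u_{a,i,\ell}$, $V_{i,j;c,d}=\sum_{k:\,f(i,k)=d}v_{c,k,j}$, and $\phi(e_{a,b;c,d})=e_{f(a,c),b,d}$, and your bucketing argument matches the paper's $O(|D|^4)$ operation count. The only cosmetic differences are your motivating gloss via three cyclically twisted degenerate embeddings (which the paper alludes to as Strassen's abstract derivation) and your verification on rank-one inputs by multilinearity, where the paper instead expands $\phi(U\cdot V)$ directly.
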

\begin{proof}
    We define $U_{a,b;i,j} = \sum_{\ell : f(\ell,j) = b} u_{a,i,\ell}$ and $V_{i,j;c,d} = \sum_{k : f(i,k) = d} v_{c,k,j}$.
    Then,
    \begin{align*}
    U \fconv[t] V  & =\sum_{a,b;c,d} \sum_{i,j} U_{a,b,i,j} \cdot V_{i,j,c,d} e_{a,b;c,d}  \\
    & =\sum_{a,\ldots,j} \, \sum_{\substack{\ell : f(\ell,j) = b, \\ k: f(i,k) = d }} u_{a,i,\ell} v_{c,k,j} e_{a,b;c,d}  \\
    & =\sum_{a,c,i,j,k,\ell} u_{a,i,\ell} v_{c,k,j} e_{a,f(\ell,j);c,f(i,k)}.
    \end{align*}
    Now, setting $\phi(e_{a,b;c,d}) = e_{f(a,c),b,d}$ (or $0$ if $f(a,c)$ is undefined), we find that 
    \[
    \phi(U \fconv[t] V) = \sum_{a,c,i,j,k,\ell} u_{a,i,\ell} v_{c,k,j} e_{f(a,c),f(\ell,j),f(i,k)}.
    \]
    When building $U$, the set of summand indices is $\{(a,b = f(\ell,j),i,j,\ell) : a,\ldots,\ell \in D\}$,
    implying that only $O(|D|^4)$ operations are performed across all entries of $U$.
    The same argument applies to $V$. Finally, seeing that $\phi$ maps unit vectors to unit vectors, also this mapping can be performed in the claimed running time. This completes the construction.
\end{proof}
While the choice of entries in $U$ and $V$ may appear \emph{ad hoc}, they can be derived from Strassen's more abstract algebraic approach.
A more algorithmically minded derivation stems from the ``split \& list'' method common in fine-grained complexity: The entries of $U$ correspond, in some sense, to partial solutions on the left, while $V$ generates partial solutions on the right. These partial solutions can then be correctly combined by a matrix product. With this setup, the proof of Theorem~\ref{thm:strassen} is immediate:
\begin{proof}[Proof of Theorem~\ref{thm:strassen}]
    Let $n = \lceil k/3 \rceil$. Then, we consider the generalized $f$-convolution as computing a mapping $(D^n)^3 \times (D^n)^3 \rightarrow (D^n)^3$, padding with $0$ as needed.
    We black-box the $n$-fold internal product structure, and consider this an arbitrary convolution with respect to a function of type $B^3 \times B^3 \rightarrow B^3$, where $B = D^n$.
    
    By Lemma~\ref{lem:fastmult}, this can be expressed as the projection of a matrix multiplication of matrices of size $(|D|^n)^2$ by $(|D|^n)^2$,
    and this expression can be evaluated in time $O(|D|^{2\omega n} + |D|^{2n}) = O(|D|^{2\omega n}).$
    The claim then follows from the choice of $n$ for growing $k$ and $|D|$.
\end{proof}
Clearly, all operations performed in the proof of Theorem~\ref{thm:strassen} only blow up the bit-length of the input numbers encoded in binary by a constant factor. Hence, instead of counting arithmetic operations, we might also count bit operations.
\begin{remark}
  Strassen's result only makes implicit use of the product structure of $D^k$. In fact, the embedding into matrix multiplication is possible for every third power of a function, irrespective of the absence or presence of product structure of the base function.
  Asymptotically, requiring divisibility by three is immaterial. 
  (Ultimately, when then employing fast matrix multiplication algorithms, the product structure of matrix multiplication is crucially used in these algorithms, albeit in a black-box manner that has nothing to do with the embedding we constructed.)
\end{remark}

\section{Proof of Theorem~\ref{thm:yates}: Yates' algorithm}
\label{sec:yates}
We now give a concrete account of a proof of Yates' algorithm, aligned with the language of this note.
Many proofs are available in the literature, although many are stated in the more concise, but abstract language of tensor algebra (cf.~\cite{BjorklundK24}).

\begin{proof}[Proof of Theorem~\ref{thm:yates}]
Let $t : D\times D \to \mathbb Q^D$ have product structure with base function $h : B\times B\to \mathbb Q^B$ and $D=B^k$ for $k \in \mathbb N$. 
Let $h = h_1+\ldots+h_r$ with each $h_i$ of rank $1$.
For our analysis, we assume that $|B| \leq r\leq |B|^2$, which can be achieved by adding dummy terms to $h$ or by choosing a trivial decomposition of rank $|B|^2$.
For $i\in [r]$, let $a_i,b_i,c_i \in \mathbb Q^B$ witness that $h_i$ has rank $1$, i.e., 
\[
h_i(x,y) = a_i(x)b_i(y)c_i\quad \text{for all }x,y\in B.
\]
For $\kappa \in [r]^k$, we define $a_\kappa,b_\kappa,c_\kappa \in \mathbb Q^D$ using the Kronecker product from Section~\ref{sec:framework} as 
\begin{align*}
a_{\kappa} & =a_{\kappa(1)}\otimes\ldots\otimes a_{\kappa(k)},\\
b_{\kappa} & =b_{\kappa(1)}\otimes\ldots\otimes b_{\kappa(k)},\\
c_{\kappa} & =c_{\kappa(1)}\otimes\ldots\otimes c_{\kappa(k)}.
\end{align*}
This is essentially a more concise way of writing~\eqref{eq:abc}.
We then have
\begin{equation}\label{eq:conv-yates}
u \fconv[t]v 
= \sum_{\kappa} \langle u, a_\kappa\rangle \langle v, b_\kappa\rangle c_\kappa.
\end{equation}

In the remainder, we show how to compute the above sum in the claimed running time. 
For this, we proceed in two stages: 
In the first stage, we compute all inner products $\langle u, a_\kappa\rangle$ and $\langle v, b_\kappa\rangle$ by an inductive process. In the second stage, we group the terms in \eqref{eq:conv-yates} by a similar inductive process to evaluate the sum efficiently.
To describe this inductive process, it helps to imagine the intermediate results organized in a full $r$-ary tree $T$ with layers $0,\ldots,k$, where the nodes in layer $i$ are indexed by $i$-tuples $\iota\in[r]^{i}$.

In the first stage, we compute two vectors $u_\iota$ and $v_\iota$ of dimension $|B|^{k-i}$ for each node $\iota$, proceeding inductively from layer $0$ to $k$. 
The vectors at layer $i$ will be indexed by tuples $(x_i,\ldots ,x_{k-1}) \in B^{k-i}$, and the vectors at the final layer $k$ will ultimately be the scalars $\langle u, a_\kappa\rangle$ and $\langle v, b_\kappa\rangle$ for $\kappa \in [r]^k$. 
\begin{itemize}
    \item At the root $\epsilon$ of $T$, we set $u_\epsilon=u$ and $v_\epsilon = v$.
    \item Given a node $\iota \in [r]^i$ at layer $i<k$ and $u_\iota, v_\iota$, and given the index $t\in [r]$ of a child node $\iota t$, we compute $u_{\iota t}, v_{\iota t}$ at all entries $(x_{i+1},\ldots,x_{k-1}) \in B^{k-i-1}$ via
        \begin{align*}
        u_{\iota t}(x_{i+1},\ldots,x_{k-1}) & = \sum_{x\in B}u_{\iota}(x,x_{i+1},\ldots,x_{k-1})a_{t}(x),\\
        v_{\iota t}(x_{i+1},\ldots,x_{k-1}) & = \sum_{x\in B}v_{\iota}(x,x_{i+1},\ldots,x_{k-1})b_{t}(x).
        \end{align*}
        This computation clearly requires $O(|B|^{k-i})$ operations and can be interpreted as a ``pointwise inner product'' $u_{\iota t}(x_{i+1},\ldots,x_{k-1}) = \langle u_\iota (\star,x_{i+1},\ldots,x_{k-1}),a_t\rangle$, likewise for $v_{\iota t}$. 
        At layer $k$, we interpret $(x_k\ldots x_{k-1})$ as empty.
\end{itemize}

To analyze the overall running time for this stage, note that layer $i$ contains $r^i$ nodes, so the total time spent in layer $i$ is $O(|B|^{k-i}r^i)= O(|B|^k\cdot (r /|B|)^i)$.
The total time over all layers is a geometric series that amounts to $O(|B|^k\cdot (r /|B|)^k)=O(r^k)$ time.

In the second stage, we traverse the tree from the leaves to the root, i.e., from layer $k$ to $0$, in order to compute the sum over $r^k$ vectors given by \eqref{eq:conv-yates}:
For all $\iota \in [r]^i$ in layer $i$, we compute a 
vector $f_{\iota}$ of dimension $k-i$ such that 
\[
c_{\iota(1)}  \otimes \ldots \otimes c_{\iota(i)} \otimes f_{\iota}
= \sum_{\iota \theta \text{ leaf below }\iota} \langle u, a_{\iota\theta}\rangle \langle v, b_{\iota\theta}\rangle  c_{\iota\theta}.
\] 
For leaves $\kappa$, we set $f_\kappa = \langle u, a_{\kappa}\rangle \langle v, b_{\kappa}\rangle$.
For an internal node $\iota$ at level $i<k$, given the vectors $f_{\iota t}$ for the children $\iota t$ of $\iota$, with $t\in [r]$, we compute $f_{\iota}$ via $f_{\iota} = \sum_{t=1}^r c_{\iota(i)} \otimes  f_{\iota t}$.
As in the analysis for the first stage, the resulting geometric series obtained from counting all operations shows that we can also finish this stage in $O(r^k k)$ time. The correctness of the above computation follows directly from the definitions.
\end{proof}
    As in the proof of Theorem~\ref{thm:strassen}, note that the bit-length of the numbers in Yates' algorithm only grows polynomially in $k$ and $r$, meaning that the bound on the number of arithmetic operations is essentially a bound on the number of bit operations.

\bibliography{refs}

\end{document}